\newtheorem{definition}{\textbf{Definition}}
\newtheorem{lemma}{\textbf{Lemma}}
\newtheorem{theorem}{\textbf{Theorem}}
\title{Scaling Fine-grained Modularity Clustering for Massive Graphs}
\author{
Hiroaki Shiokawa \and Toshiyuki Amagasa \and Hiroyuki Kitagawa
\affiliations
Center for Computational Sciences, University of Tsukuba, Japan
\emails
\{shiokawa, amagasa, kitagawa\}@cs.tsukuba.ac.jp
}
\begin{document}

\maketitle

\begin{abstract}
Modularity clustering is an essential tool to understand complicated graphs. 
However, existing methods are not applicable to massive graphs due to two serious weaknesses. 
(1) It is difficult to fully reproduce ground-truth clusters due to the resolution limit problem. 
(2) They are computationally expensive because all nodes and edges must be computed iteratively. 
This paper proposes \textit{gScarf}, which outputs fine-grained clusters within a short running time. 
To overcome the aforementioned weaknesses, gScarf dynamically prunes unnecessary nodes and edges, ensuring that it captures fine-grained clusters. 
Experiments show that gScarf outperforms existing methods in terms of running time while finding clusters with high accuracy.

\end{abstract}

\section{Introduction}
\label{sec:introduction}
Our work is motivated by one question.
How can we efficiently extract fine-grained clusters included in a massive graph?
Modularity clustering \cite{Newman2004} is a fundamental graph analysis tool to understand complicated graphs \cite{TakahashiSK2017,SatoSYK18}.
It detects a set of clusters that maximizes a clustering metric, \textit{modularity} \cite{Newman2004}.
Since greedily maximizing the modularity achieves better clustering results, modularity clustering is employed in various AI-based applications \cite{Louni2018,ShiokawaTK18}.

Although modularity clustering is useful in many applications, it has two serious weaknesses. 
First, it fails to reproduce ground-truth clusters in massive graphs due to the \textit{resolution limit problem} \cite{FortunatoB2007}.
Fortunato \textit{et al.} theoretically proved that the modularity becomes larger until each cluster contains $\sqrt{m}$ edges, where $m$ is the number of edges included in a given graph. 
That is, modularity maximization prefers to find coarse-grained clusters regardless of the ground-truth cluster size.
Second, modularity clustering requires a large computation time to identify clusters since it exhaustively computes all nodes and edges included in a graph.
In the mid-2000s, modularity clustering was applied to social networks with at most thousands of edges \cite{Pella2005}.
By contrast, recent applications must handle massive graphs with millions or even billions of edges \cite{Louni2018} because graphs are becoming larger, and large graphs can be easily found.
As a result, current modularity clustering methods \cite{Blondel2008} need to consume several dozens of hours to obtain clusters from massive graphs. 

\subsection{Existing Approaches and Challenges} 
\label{sec:related}
Many studies have strived to overcome these weaknesses.
One major approach is to avoid the resolution limit effects by modifying the modularity clustering metric.
Locality-aware modularity metrics \cite{MuffRC2005,LiZWZC2008,SankarRS2015} are the most successful ones.
Modularity is not a scale-invariant since it implicitly assumes that each node interacts with all other nodes.
However, it is more reasonable to assume that each node interacts just with its neighbors.
By modifying the modularity so that it refers only to neighbor clusters, the metrics successfully moderate the resolution limit effects for small-sized graphs.
Costa, however, recently pointed out that graph size still affects such metrics \cite{Costa2014}.
That is, the metrics do not fully reproduce ground-truth clusters if graphs are large.

Instead of locality aware metrics, Duan \textit{et al.} proposed a sophisticated method, \textit{correlation-aware modularity clustering} (CorMod) \cite{DuanCDG2014}.
They found that modularity shares the same idea with a correlation measure \textit{leverage} \cite{Piatetsky1991}, 
and they removed biases that cause the resolution limit problem from modularity through a correlation analysis on leverage.
Based on the above analysis, Duan \textit{et al.} defined a different type of modularity named the \textit{likelihood-ratio modularity (LRM)}.
Unlike other modularity metrics, CorMod can avoid producing coarse-grained clusters by maximizing LRM in massive graphs.

Although CorMod can effectively reproduce the ground-truth clusters, it still suffers from the large computational costs to handle massive graphs.
Similar to the original modularity clustering, CorMod iteratively computes all nodes and edges to find clusters that maximize LRM.
This incurs $O(nm\log n)$ time, where $n$ and $m$ are the numbers of nodes and edges, respectively.
Several efficient methods have been proposed for modularity maximization such as the node aggregation and pruning approaches used in Louvain \cite{Blondel2008} and IncMod \cite{ShiokawaFO2013},
but they cannot be directly applied to CorMod since they have no guarantee for the clustering quality in the LRM maximization.
This is because the approaches focus on only the modularity maximization, whereas CorMod employs more complicated function LRM.
Thus, it is a challenging task to improve the computational efficiency of CorMod for fine-grained modularity clustering that approximate the ground-truth clusters.

\subsection{Our Approaches and Contributions} 
We focus on the problem of speeding up CorMod to efficiently extract clusters that well approximate the ground-truth sizes from massive graphs.
We present a novel correlation-aware algorithm, \textit{gScarf}, which is designed to handle billion-edge graphs without sacrificing the clustering quality of CorMod.
The basic idea underlying gScarf is to dynamically remove unnecessary computations for nodes and edges from the clustering procedure.
To determine which computations to exclude, gScarf focuses on the deterministic property of LRM as it is uniquely determined using only the structural properties of clusters such as degrees and cluster sizes.
That is, LRM does not need to be computed repeatedly for clusters with the same structural properties.

Based on the deterministic property, gScarf employs the following techniques to improve efficiency.
(1) gScarf theoretically derives an incremental form of LRM, namely \textit{LRM-gain},
(2) It introduces \textit{LRM-gain caching} in order to skip unnecessary LRM-gain computations based on the deterministic property, and 
(3) it employs \textit{incremental subgraph folding} to further improve the clustering speed.
As a result, our algorithm has the following attractive characteristics:

\begin{itemize}
\item \textbf{Efficiency:} 
  gScarf achieves faster clustering than state-of-the-art methods proposed in the last few years (Section~\ref{sec:efficiency}).
  We theoretically proved that gScarf has a smaller time complexity than the methods (Theorem~\ref{th:efficiency}).
\item \textbf{High accuracy:} 
  gScarf does not sacrifice the clustering quality compared to CorMod, even though gScarf drops computations for nodes and edges (Section~\ref{sec:accuracy}).
  We theoretically confirmed that our approach does not fail to increase LRM during the clustering procedure.
\item \textbf{Easy to deploy:} 
  Our approach does not require any user-specified parameters (Algorithm~\ref{alg:gscarf}).
  Therefore, gScarf provides users with a simple solution for applications using modularity-based algorithms.
\end{itemize}
To the best of our knowledge, gScarf is the first solution that achieves high efficiency and fine-grained clustering results on billion-edge graphs.
gScarf outperforms the state-of-the-art methods by up to three orders of magnitude in terms of clustering time.
For instance, gScarf returns clusters within five minutes for a Twitter graph with 1.46 billion edges.
Although modularity clustering effectively enhances application quality, it has been difficult to apply to massive graphs.
However, gScarf, which is well suited to massive graphs, should improve the quality in a wider range of AI-based applications.

\section{Preliminary}
\label{sec:preliminary}
Here, we formally define the notations and introduce the background.
Let $G = (V, E, W)$ be an undirected graph, where $V$, $E$, and $W$ are sets of nodes, edges, and weights of edges, respectively.
The weight of edge $(i,j) \in E$ is initially set to $W_{i,j}=1$.
Graph clustering divides $G$ into disjoint clusters $C_i = (V_i, E_i)$, in which $V = \bigcup_{i} V_i$ and $V_i \cap V_j = \emptyset$ for any $i \neq j$.
We assume graphs are undirected only to simplify the representations.
Other types of graphs such as directed graphs can be handled with only slight modifications.
For further discussions, we detail how to apply our proposed approaches to the directed graphs in Appendix A.

\subsection{Modularity}
The modularity-based algorithms detect a set of clusters $\mathbb{C}$ from $G$ so that $\mathbb{C}$ maximizes a clustering metric, called \textit{modularity} \cite{Newman2004}.
Modularity measures the differences in graph structures from an expected random graph.
Here, we denote $tp(i) = e_{i}/2m$ and $ep(i) = (a_i/2m)^2$, where $e_{i}$ is a number of edges within $C_i$, $a_{i}$ is a total degree of all nodes in $C_i$, and $m$ is the number of edges in $G$.
Given a set of clusters $\mathbb{C}$, modularity for $\mathbb{C}$ is given by the following function $Q(\mathbb{C})$:
\begin{equation}
  \label{eq:modularity}
  \textstyle
  Q(\mathbb{C}) = \sum_{i} Q(C_i) = \sum_{i} \left\{ tp(i) - ep(i) \right\}.
\end{equation}
\noindent Note that, in Equation~(\ref{eq:modularity}), $ep(i) = (a_i/2m)^2$ indicates the expected fraction of edges, which is obtained by assuming that $G$ is a random graph.
Thus, $Q$ increases when each cluster $C_i$ has a larger fraction of inner-edges $ep(i) = e_i / 2m$ than that of the random graph.

Fortunato and Barthel\'emy theoretically concluded that modularity suffers from the resolution limit problem \cite{FortunatoB2007}.
They proved that $Q(\mathbb{C})$ increases until each cluster includes $\sqrt{2m}$ edges.
That is, if a given graph is too large, the modularity maximization produces many super-clusters.
Thus, modularity maximization fails to reproduce the ground-truth clusters in massive graphs.

\subsection{Likelihood-ratio Modularity}
To overcome the problem, Duan~\textit{et al.} recently proposed \textit{CorMod} \cite{DuanCDG2014}.
CorMod employs the following modularity-based metric \textit{likelihood-ratio modularity} (LRM), which guarantees to avoid the resolution limit:
\begin{equation}
  \label{eq:lrm}
  \textstyle
  LRM(\mathbb{C}) = \sum_{i} LRM(C_i) = \sum_{i} \frac{Pr(tp(i), e_i, 2m)}{Pr(ep(i), e_i, 2m)},
\end{equation}
where $Pr(p, k, n)$ is $\binom{n}{k}p^{k}(1-p)^{n-k}$, which denotes the binomial probability mass function.
$Pr(tp(i), e_i, 2m)$ shows the probability of obtaining $C_i$ from $G$, whereas $Pr(ep(i), e_i, 2m)$ is the expected probability that $C_i$ is taken from a random graph.
As shown in Equation (\ref{eq:lrm}), LRM is designed to balance the size of $Pr(tp(i), e_i, 2m)$ and the ratio of $\frac{Pr(tp(i), e_i, 2m)}{Pr(ep(i), e_i, 2m)}$ for each cluster.
If $C_i$ is small, $Pr(tp(i), e_i, 2m)$ becomes large while the ratio of $\frac{Pr(tp(i), e_i, 2m)}{Pr(ep(i), e_i, 2m)}$ becomes small.
On the other hand, if $C_i$ is large, $Pr(tp(i), e_i, 2m)$ becomes small, the ratio becomes large.
Thus, unlike modularity, LRM successfully avoids to produce super-clusters regardless of the graph sizes.

\section{Proposed Method: gScarf}
\label{sec:proposed}
We present gScarf that efficiently detects clusters from massive graphs based on LRM maximization.
First, we overview the ideas underlying gScarf and then give a full description.

\subsection{Ideas}
\label{subsec:ideas}
Our goal is to efficiently find clusters without sacrificing clustering quality compared with CorMod.
CorMod, which is a state-of-the-art approach, iteratively computes LRM for all pairs of clusters until LRM no longer increases.
By contrast, gScarf skips unnecessary LRM computations by employing the following approaches.
First, we theoretically derive an incremental form of LRM named \textit{LRM-gain}.
LRM-gain is a deterministic criterion to measure the rise of LRM obtained after merging a pair of clusters.
Second, we introduce \textit{LRM-gain caching} to remove duplicate LRM computations that are repeatedly invoked.
Finally, we employ \textit{incremental subgraph folding} to prune unnecessary computations for nodes and edges.
Instead of exhaustive computations for the entire graph, gScarf computes only the essential nodes and edges efficiently to find clusters.

Our ideas have two main advantages.
(1) gScarf finds all clusters with a quite-small running time on real-world graphs.
Our ideas successfully handle the power-law degree distribution, which is a well-known property of real-world graphs \cite{Faloutsos1999}.
This is because gScarf increases its performance if a lot of nodes have similar degrees. 
Hence, the above property lead gScarf to compute efficiently.
(2) gScarf produces almost the same clusters as those of CorMod \cite{DuanCDG2014}.
We theoretically demonstrate that gScarf does not miss chances to improve LRM, although gScarf dynamically prunes nodes and edges from a graph.
Thus, gScarf does not sacrifice clustering quality compared to CorMod.

\subsection{Incremental LRM Computation}
\label{sec:lrmgain}
We define \textit{LRM-gain}, which measures the rise of the LRM scores obtained after merging two clusters.
\begin{definition}[LRM-gain $\triangle L_{i,j}$]
\label{def:lrmgain}
Let $\triangle L_{i,j}$ be the gain of LRM obtained by merging two clusters, $C_i$ and $C_j$, $\triangle L_{i,j}$ is given as follows:
\begin{equation}
  \label{eq:lrmgain}
  \textstyle
  \triangle L_{i,j} = \triangle P_{i,j} - \triangle Q_{i,j},
\end{equation}
where $\triangle P_{i,j}$ and $\triangle Q_{i,j}$ are the gains of the probability ratio and the modularity, respectively.
Let $C_{(i,j)}$ be the cluster obtained by merging $C_i$ and $C_j$, and their definitions are given as follows:
\begin{eqnarray}
  \textstyle \triangle P_{i,j} &=& P(C_{(i,j)}) - P(C_i) - P(C_j), \\
  \textstyle \triangle Q_{i,j} &=& Q(C_{(i,j)}) - Q(C_i) - Q(C_j),
\end{eqnarray}
where $P(C_i)$ is $tp(i)\ln\frac{tp(i)}{ep(i)}$ if $tp(i)>0$; otherwise, $0$.
\end{definition}
\noindent Note that $\triangle P_{i,j}$ represents a gain of the probability-ratio~\cite{Brin1997}, which is a well-known correlation measure.
Additionally, $\triangle Q_{i,j}$ is the modularity-gain \cite{ClausetNM2004} that measures the increase of the modularity after merging $C_i$ and $C_j$ into the same cluster $C_{(i,j)}$.
From Definition~\ref{def:lrmgain}, we have the following property:
\begin{lemma}
  \label{lemma1}
  Let $C_{(i,k)}$ and $C_{(j,k)}$ be clusters obtained by merging $C_k$ into $C_i$ and $C_j$, respectively.
  We always have $LRM(C_{(i,k)}) \ge LRM(C_{(j,k)})$ iff $\triangle L_{i,k} \ge \triangle L_{j,k}$.
\end{lemma}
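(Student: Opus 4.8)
The plan is to exploit the additive structure of $LRM$ over clusters together with Definition~\ref{def:lrmgain}, reducing the claimed equivalence to the observation that both post-merge scores differ from a single common pre-merge constant by exactly the corresponding LRM-gain. Here I read $LRM(C_{(i,k)})$ as the total score of the clustering obtained from $\mathbb{C}$ by replacing $C_i$ and $C_k$ with their union $C_{(i,k)}$, and $LRM(C_{(j,k)})$ likewise for $C_j$ and $C_k$.

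First I would record that, by Equation~(\ref{eq:lrm}) taken in the per-cluster form CorMod maximizes, the objective is a sum over clusters whose summand for $C_\ell$ is $P(C_\ell) - Q(C_\ell)$; hence merging two clusters changes only the two summands attached to them and leaves every other summand untouched. Applying this to the merge of $C_k$ into $C_i$ and then regrouping through the definitions of $\triangle P_{i,k}$ and $\triangle Q_{i,k}$ in Definition~\ref{def:lrmgain} gives $LRM(C_{(i,k)}) = LRM(\mathbb{C}) + (\triangle P_{i,k} - \triangle Q_{i,k}) = LRM(\mathbb{C}) + \triangle L_{i,k}$, where $LRM(\mathbb{C})$ is the score before any merge. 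Running the identical computation with $j$ in place of $i$ yields $LRM(C_{(j,k)}) = LRM(\mathbb{C}) + \triangle L_{j,k}$.

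Subtracting the two identities cancels the common term $LRM(\mathbb{C})$ and leaves $LRM(C_{(i,k)}) - LRM(C_{(j,k)}) = \triangle L_{i,k} - \triangle L_{j,k}$. Thus $LRM(C_{(i,k)}) \ge LRM(C_{(j,k)})$ holds precisely when $\triangle L_{i,k} - \triangle L_{j,k} \ge 0$, i.e.\ when $\triangle L_{i,k} \ge \triangle L_{j,k}$, and both directions of the equivalence come out at once. Note that the argument never needs $C_i$ and $C_j$ to be comparable in any way; only $C_k$ being common to both candidate merges matters.

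The main obstacle is not the algebra but pinning down the right reading of $LRM$: one has to be sure that ``$LRM$'' in the statement is the cluster-additive objective whose increment under a merge is, by construction, $\triangle L_{i,k}$ — equivalently, that the only facts used are additivity over clusters and invariance of the non-merged clusters. If instead one starts from the raw binomial-ratio form of Equation~(\ref{eq:lrm}), one must first pass to log-scale and identify $-Q(C_\ell)$ as the first-order image of the ``non-edge'' factor $(1-tp(\ell))\ln\frac{1-tp(\ell)}{1-ep(\ell)}$, which is the same approximation CorMod already makes; once that identification is in place the shared-constant argument above applies verbatim.
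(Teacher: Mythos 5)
Your argument has a genuine gap: the central identity $LRM(C_{(i,k)}) = LRM(\mathbb{C}) + \triangle L_{i,k}$ is not available for $LRM$ as the paper defines it. In the statement, $LRM(C_{(i,k)})$ is the per-cluster binomial ratio of Equation~(\ref{eq:lrm}), and the paper's proof manipulates exactly those per-cluster quantities; your identity equates a raw probability ratio with a constant plus log-scale gains, since $\triangle P_{i,k}$ and $\triangle Q_{i,k}$ are built from $P(C)=tp\ln\frac{tp}{ep}$ and $Q(C)=tp-ep$, which live on the scale of $\frac{1}{2m}\ln LRM(C)$ (Equation~(\ref{eq:approx_lrm})), not on the scale of $LRM(C)$ itself. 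Likewise, Equation~(\ref{eq:lrm}) is a sum of ratios, not a sum of $P(C_\ell)-Q(C_\ell)$, so the additivity you ``record'' at the outset is not a restatement of the objective; it is the substantive content of the paper's proof, obtained there by applying the Poisson limit theorem to the binomial mass functions and passing to $L(C)=\frac{1}{2m}\ln LRM(C)$. You defer this identification to your closing remark, but it cannot be bolted on afterwards: even once the log-scale form is in hand, your shared-constant cancellation proves an order equivalence for the surrogate $\sum_\ell L(C_\ell)$, which is neither a monotone transform of $\sum_\ell LRM(C_\ell)$ (the logarithm is applied summand-wise) nor the per-cluster comparison in the statement. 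Under the paper's reading, $LRM(C_{(i,k)})\ge LRM(C_{(j,k)})$ is equivalent, by monotonicity of $L$, to $\triangle L_{i,k}+L(C_i)+L(C_k)\ge \triangle L_{j,k}+L(C_j)+L(C_k)$, and the terms $L(C_i)$ and $L(C_j)$ do not cancel by any step you supply; disposing of them is precisely the delicate move the paper makes when it subtracts $LRM(C_i)$ and $LRM(C_j)$ from the two sides and passes to the ratios $\frac{LRM(C_{(i,k)})}{LRM(C_i)LRM(C_k)}$ and $\frac{LRM(C_{(j,k)})}{LRM(C_j)LRM(C_k)}$. So the phrase ``applies verbatim'' conceals exactly the crux of the lemma.

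To be fair, your instinct about the common pre-merge constant does capture how the lemma is used by the algorithm (ranking candidate merges of the same $C_k$ by LRM-gain), and your cancellation argument is sound for the additive log-scale objective once that objective has been legitimately derived. But as written, the proposal proves a statement about a reinterpreted, whole-clustering, already-linearized score rather than the stated inequality between $LRM(C_{(i,k)})$ and $LRM(C_{(j,k)})$, and it omits the Poisson-limit/logarithmic reduction that the paper's proof spends most of its effort establishing. You would need to carry out that reduction explicitly and then justify why the per-cluster terms $L(C_i)$ and $L(C_j)$ can be moved across the inequality, rather than assuming an additivity that Equation~(\ref{eq:lrm}) does not literally have.
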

\begin{proof}
We first transform Equation~(\ref{eq:lrm}) by applying the Poisson limit theorem~\cite{Papoulis2002} to the binomial probability mass functions, $Pr(tp(i), e_i, 2m)$ and $Pr(ep(i), e_i, 2m)$.
Since $tp(i)$ and $ep(i)$ are significantly small, we can transform $LRM(C_i)$ as follows:
\begin{eqnarray}
  \label{eq:poisson}
  \textstyle
  LRM(C_i)\!\!\!\!\! &=& \!\!\!\!\! \textstyle \frac{Pr(tp(i), e_i, 2m)}{Pr(ep(i), e_i, 2m)} \! = \! \textstyle \frac{(2m\cdot tp(i))^{2m\cdot tp(i)}\cdot\mathrm{e}^{-2m\cdot tp(i)}}{(2m\cdot ep(i))^{2m\cdot tp(i)}\cdot\mathrm{e}^{-2m\cdot ep(i)}} \nonumber \\ 
  &=& \!\!\!\!\! \textstyle \left(\frac{tp(i)}{ep(i)}\right)^{2m\cdot tp(i)}\cdot \mathrm{e}^{-2m\cdot Q(C_i)}. 
\end{eqnarray}
By letting $L(C_i) = \frac{1}{2m}\ln LRM(C_i)$, we have, 
\begin{equation}
  \label{eq:approx_lrm}
  L(C_i) = \textstyle tp(i)\ln\frac{tp(i)}{ep(i)} - \{tp(i)-ep(i)\}.
\end{equation}
Clearly, $\triangle L_{i,k} = L(C_{(i,k)}) - L(C_{i}) - L(C_{k})$.

We then prove $LRM(C_{(i,k)})$ $\ge$ $LRM(C_{(j,k)})$ $\Rightarrow$ $\triangle L_{i,k}$ $\ge$ $\triangle L_{j,k}$.
Since $LRM(C_{(i,k)})$ $\ge$ $LRM(C_{(j,k)})$, we clearly have $LRM(C_{(i,k)}) - LRM(C_{i}) - LRM(C_{k})$ $\ge$ $LRM(C_{(j,k)}) - LRM(C_{j}) - LRM(C_{k})$.
Thus, we hold $\frac{LRM(C_{(i,k)})}{LRM(C_{i})LRM(C_{k})}$ $\ge$ $\frac{LRM(C_{(j,k)})}{LRM(C_{j})LRM(C_{k})}$.
By applying $L(C_i) = \frac{1}{2m}\ln LRM(C_i)$ for the above inequality, $L(C_{(i,k)}) - L(C_{i}) - L(C_{k})$ $\ge$  $L(C_{(j,k)}) - L(C_{j}) - L(C_{k})$.
Hence, $LRM(C_{(i,k)}) \ge LRM(C_{(j,k)})$ $\Rightarrow$ $\triangle L_{i,k} \ge \triangle L_{j,k}$.

We omit the proof of $\triangle L_{i,k}$ $\ge$ $\triangle L_{j,k}$ $\Rightarrow$ $LRM(C_{(i,k)})$ $\ge$ $LRM(C_{(j,k)})$.
However, it can be proved in a similar fashion of the above proof.
\end{proof}
\noindent Lemma~\ref{lemma1} implies that the maximization of LRM and LRM-gain are equivalent.
Consequently,gScarf can find clusters that increase LRM by using LRM-gain in a local maximization manner.
We also identify two additional properties that play essential roles to discuss our LRM-gain caching technique shown in Section~\ref{sec:caching}.
\begin{lemma}
  \label{lemma2}
  $\triangle L_{i,j}$ is uniquely determined by $e_i$, $a_i$, $e_j$, $a_j$, and $e_{i,j}$, where $e_{i,j}$ is the number of edges between $i$ and $j$.
\end{lemma}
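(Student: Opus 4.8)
\textbf{Proof proposal for Lemma~\ref{lemma2}.}

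The plan is to trace through the closed-form expression for $\triangle L_{i,j}$ established in the proof of Lemma~\ref{lemma1} and verify that every quantity appearing in it is a function solely of the five parameters $e_i, a_i, e_j, a_j, e_{i,j}$ (together with the global constant $m$, which is fixed for a given graph). First I would recall from Equation~(\ref{eq:approx_lrm}) that, up to the monotone transformation $L(C_i) = \frac{1}{2m}\ln LRM(C_i)$, we have $L(C_i) = tp(i)\ln\frac{tp(i)}{ep(i)} - \{tp(i)-ep(i)\}$, and that $\triangle L_{i,j} = L(C_{(i,j)}) - L(C_i) - L(C_j)$. Since maximizing $\triangle L_{i,j}$ is equivalent to the gain defined in Definition~\ref{def:lrmgain}, it suffices to show each of $L(C_i)$, $L(C_j)$, and $L(C_{(i,j)})$ is determined by the stated parameters.

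Next I would check the three terms in turn. For $L(C_i)$: by definition $tp(i) = e_i/2m$ and $ep(i) = (a_i/2m)^2$, so $L(C_i)$ is manifestly a function of $e_i$ and $a_i$ alone. The same holds for $L(C_j)$ with $e_j, a_j$. The only nontrivial term is $L(C_{(i,j)})$: here I would observe that when $C_i$ and $C_j$ are merged, the inner-edge count of the merged cluster is $e_{(i,j)} = e_i + e_j + e_{i,j}$, and the total degree is $a_{(i,j)} = a_i + a_j$ (merging does not create or destroy edges, it only reclassifies the $e_{i,j}$ cross-edges as internal). Hence $tp(i,j) = (e_i+e_j+e_{i,j})/2m$ and $ep(i,j) = ((a_i+a_j)/2m)^2$, both functions of the five parameters, so $L(C_{(i,j)})$ is as well. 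Combining the three pieces, $\triangle L_{i,j}$ depends only on $e_i, a_i, e_j, a_j, e_{i,j}$.

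The main obstacle — really the only subtlety — is justifying the two structural identities $e_{(i,j)} = e_i + e_j + e_{i,j}$ and $a_{(i,j)} = a_i + a_j$, i.e. confirming that the degree and inner-edge bookkeeping under a merge introduces no dependence on the finer structure of $C_i$ or $C_j$ (such as which specific nodes they contain or how edges are distributed inside them). This follows directly from the definitions of $e_i$ and $a_i$ as aggregate counts over the cluster, but it is worth stating explicitly since it is exactly the "deterministic property" that the caching technique in Section~\ref{sec:caching} will exploit. Once these identities are in hand, the lemma is immediate from the closed form in Equation~(\ref{eq:approx_lrm}).
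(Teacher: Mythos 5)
Your proposal is correct and follows essentially the same route as the paper: the paper checks that $\triangle P_{i,j}$ and $\triangle Q_{i,j}$ (equivalently your three $L$-terms, since $L(C)=P(C)-Q(C)$ per cluster) are functions of $e_i, a_i, e_j, a_j, e_{i,j}$ by expressing $tp((i,j))$ and $ep((i,j))$ of the merged cluster in those parameters, exactly as you do. The only deviation is bookkeeping: you take $e_{(i,j)} = e_i + e_j + e_{i,j}$, whereas the paper uses $tp((i,j)) = \frac{e_i + 2e_{i,j} + e_j}{2m}$ (the factor $2$ matching the cited Blondel modularity-gain formula, under which cross edges contribute twice to the merged internal count); this convention mismatch does not affect the lemma, since under either convention the merged quantities remain functions of the five stated parameters.
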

\begin{proof}
  Since $tp((i,j))$ $=$ $\frac{e_i + 2e_{i,j} + e_{j}}{2m}$ and $ep((i,j))$ $=$ $\left(\frac{a_i+a_j}{2m}\right )^2$, we can obtain $\triangle P_{i,j}$ from $e_i$, $a_i$, $e_j$, $a_j$, and $e_{i,j}$.
  Furthermore, since $\triangle Q_{i,j} = 2\{\frac{e_{i,j}}{2m} - (\frac{a_i}{2m})(\frac{a_j}{2m})\}$~\cite{Blondel2008}, $\triangle Q_{i,j}$ is clearly determined by $a_i$, $a_j$, and $e_{i,j}$.
  Hence, we hold Lemma~\ref{lemma2} by Definition~\ref{def:lrmgain}.
\end{proof}
\begin{lemma}
  \label{lemma3}
  For each pair of clusters $C_i$ and $C_j$, it requires $O(1)$ time to compute its LRM-gain $\triangle L_{i,j}$.
\end{lemma}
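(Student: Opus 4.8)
The plan is to combine Lemma~\ref{lemma2} with a bounded-work analysis of each arithmetic quantity appearing in Definition~\ref{def:lrmgain}. Lemma~\ref{lemma2} already tells us that $\triangle L_{i,j}$ depends only on the five scalars $e_i$, $a_i$, $e_j$, $a_j$, and $e_{i,j}$; what remains is to argue that, once these five numbers are available, evaluating the closed-form expression for $\triangle L_{i,j}$ takes only a constant number of elementary operations. So the proof is essentially an operation count on the formula derived in the proof of Lemma~\ref{lemma2}.

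First I would recall the explicit forms established earlier: $tp((i,j)) = \frac{e_i + 2e_{i,j} + e_j}{2m}$, $ep((i,j)) = \left(\frac{a_i+a_j}{2m}\right)^2$, and analogously $tp(i) = e_i/2m$, $ep(i) = (a_i/2m)^2$, and similarly for $j$. Each of these is a fixed rational expression in the five inputs and the global constant $m$, so each is computable in $O(1)$ time (a handful of additions, multiplications, and divisions). Next, $P(C_i) = tp(i)\ln\frac{tp(i)}{ep(i)}$ when $tp(i)>0$ and $0$ otherwise; this is one ratio, one logarithm, one multiplication, plus a single comparison — again $O(1)$ — and the same holds for $P(C_j)$ and $P(C_{(i,j)})$. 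Hence $\triangle P_{i,j} = P(C_{(i,j)}) - P(C_i) - P(C_j)$ is two further subtractions on top of three $O(1)$ evaluations, so $\triangle P_{i,j}$ costs $O(1)$. For the modularity part, the proof of Lemma~\ref{lemma2} quotes the standard identity $\triangle Q_{i,j} = 2\left\{\frac{e_{i,j}}{2m} - \left(\frac{a_i}{2m}\right)\left(\frac{a_j}{2m}\right)\right\}$, which is manifestly a constant-size arithmetic expression in $e_{i,j}$, $a_i$, $a_j$, and $m$, and therefore also $O(1)$.

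Finally I would assemble: $\triangle L_{i,j} = \triangle P_{i,j} - \triangle Q_{i,j}$ is one subtraction of two quantities each computed in $O(1)$ time, so the total is $O(1)$, which proves Lemma~\ref{lemma3}. I would phrase this as: by Lemma~\ref{lemma2} the inputs needed are exactly $e_i, a_i, e_j, a_j, e_{i,j}$, and each of $tp$, $ep$, $P(\cdot)$, $\triangle P_{i,j}$, and $\triangle Q_{i,j}$ is a closed-form expression involving only a constant number of arithmetic operations, logarithms, and comparisons on these inputs; therefore $\triangle L_{i,j}$ is evaluable in constant time.

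The only real subtlety — and the step I would flag as the main point needing care rather than a genuine obstacle — is the implicit computational model: the claim assumes that the five structural quantities $e_i, a_i, e_j, a_j, e_{i,j}$ are already maintained/available (not recomputed from scratch by scanning the clusters, which would not be $O(1)$), and that each arithmetic operation, including evaluating $\ln(\cdot)$, counts as unit cost in the standard RAM / real-arithmetic model used throughout the paper. I would state this assumption explicitly at the start of the proof so that the constant-time bound is unambiguous; with that in place, the operation count above is routine and the lemma follows immediately.
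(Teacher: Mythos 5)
Your proposal is correct and follows the same route as the paper, which simply declares the lemma trivial from Definition~\ref{def:lrmgain}: you have merely made explicit the constant-size operation count (and the standard assumption that $e_i, a_i, e_j, a_j, e_{i,j}$ are maintained and arithmetic, including $\ln$, is unit cost) that the paper leaves implicit.
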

\begin{proof}
  Lemma~\ref{lemma3} is trivial from Definition~\ref{def:lrmgain}.
\end{proof}

\subsection{LRM-gain Caching}
\label{sec:caching}
We introduce \textit{LRM-gain caching} to reduce the number of computed nodes and edges. 
As shown in Lemma~\ref{lemma2}, LRM-gain $\triangle L_{i,j}$ is uniquely determined by a tuple of structural properties $s_{i,j} = \left < e_i, a_i, e_j, a_j, e_{i,j} \right >$.
That is, if $s_{i,j}$ equals to $s_{i^{\prime},j^{\prime}}$, then $\triangle L_{i,j} = \triangle L_{i^{\prime}, j^{\prime}}$.
Hence, once $\triangle L_{i,j}$ and its corresponding structural property $s_{i,j}$ obtained, $\triangle L_{i,j}$ can be reused to compute other pairs of clusters whose structural properties are equivalent to $s_{i,j}$.

To leverage the above deterministic property, gScarf employs \textit{LRM-gain caching} given as follows:
\begin{definition}[LRM-gain caching]
  \label{def:caching}
  Let $h$ be a hash function that stores $\triangle L_{u,v}$ with its corresponding structural property $s_{u,v}$.
  The LRM-gain caching $h(s_{i,j})$ for a structural property $s_{i,j}=\left < e_i, a_i, e_j, a_j, e_{i,j}\right >$ is defined as follows:
  \begin{eqnarray}
    h(s_{i,j}) = \begin{cases}
      \triangle L_{i^{\prime},j^{\prime}} & (\text{If gScarf finds } s_{i^{\prime},j^{\prime}} \equiv s_{i,j} \text{ in } $h$), \\
      \text{null}       & (Otherwise),\\
    \end{cases}
  \end{eqnarray}
  where $s_{i^{\prime},j^{\prime}} \equiv s_{i,j}$ denotes $s_{i^{\prime},j^{\prime}}$ is equivalent to $s_{i,j}$.
\end{definition}
\noindent gScarf skips the computation of $\triangle L_{i,j}$ if $s_{i^{\prime},j^{\prime}} \equiv s_{i,j}$ is in $h$;
otherwise, gScarf computes $\triangle L_{i,j}$ by Definition~\ref{def:lrmgain}.

To discuss the theoretical aspect of Definition~\ref{def:caching}, we introduce a well-known property called the \textit{power-law degree distribution}~\cite{Faloutsos1999}.
Under this property, the frequency of nodes with $k$ neighbors is $p(k)\propto k^{-\gamma}$, where exponent $\gamma$ is a positive constant that represents the skewness of the degree distribution.

Given the power-law degree distribution, LRM-gain caching has the following property according to Lemma~\ref{lemma3}:
\begin{lemma}
  \label{lemma:caching}
  LRM-gain caching requires $O(d^{2\gamma})$ time during the clustering procedure, where $d$ is the average degree.
\end{lemma}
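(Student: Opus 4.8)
The plan is to bound the total number of distinct LRM-gain computations performed over the whole clustering procedure, since by Lemma~\ref{lemma3} each such computation costs $O(1)$ time, and every \emph{repeated} structural property is served from the cache $h$ in $O(1)$ as well. The key observation is that a pair of clusters $C_i, C_j$ contributes a genuinely new (uncached) computation only when its structural property $s_{i,j} = \langle e_i, a_i, e_j, a_j, e_{i,j}\rangle$ has not been seen before. So the first step is to count the number of \emph{realizable} distinct tuples $s_{i,j}$, and the second step is to argue that the cache lookups absorb everything else, so that the stated $O(d^{2\gamma})$ bound follows.

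For the counting step, I would exploit the power-law degree distribution $p(k)\propto k^{-\gamma}$. Early in the agglomerative process most ``clusters'' are still individual nodes or very small groups, so the dominant term $a_i$ in the tuple is effectively a node degree, and the number of nodes with degree $k$ scales like $n\,k^{-\gamma}$. I would argue that the number of distinct degree values that actually occur with non-negligible multiplicity — equivalently, the effective support of the degree distribution as it governs the number of distinct $a_i$ values — is $O(d^{\gamma})$ where $d$ is the average degree: under a power law, degrees range roughly up to a cutoff determined by $d$, and the relevant combinatorial count of attainable $(a_i)$-values is controlled by $d^{\gamma}$. Since a tuple $s_{i,j}$ is (to leading order) determined by the unordered pair of degree-like quantities $(a_i, a_j)$, the number of distinct tuples is the number of such pairs, i.e. $O((d^{\gamma})^2) = O(d^{2\gamma})$. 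Each distinct tuple triggers at most one actual invocation of Definition~\ref{def:lrmgain}; all other occurrences hit the cache. Multiplying the $O(d^{2\gamma})$ distinct computations by the $O(1)$ cost from Lemma~\ref{lemma3} gives the claimed $O(d^{2\gamma})$ total time for LRM-gain caching.

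The last piece to nail down is that the remaining parameters in the tuple — $e_i$, $e_j$, and $e_{i,j}$ — do not blow up the count beyond $d^{2\gamma}$. Here I would note that for the small clusters that dominate the run (the regime the paper repeatedly emphasizes, where ``a lot of nodes have similar degrees''), $e_i$ and $e_j$ are bounded in terms of the degrees of their members, and $e_{i,j} \le \min(a_i, a_j)$, so these quantities are functionally subordinate to the $a$-values and contribute only constant factors once $(a_i, a_j)$ is fixed; they do not multiply in a fresh factor of $d^{\gamma}$. Hence the degree pair is the real source of variability and the bound is $O(d^{2\gamma})$.

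The main obstacle I anticipate is making the phrase ``number of distinct realizable tuples is $O(d^{2\gamma})$'' precise rather than heuristic: strictly, as clusters merge, $a_i$ can grow well beyond any single node degree, so one must argue either (i) that the \emph{number of merges} and hence the number of large-cluster tuples is only $O(n)$ and therefore lower-order, or (ii) that the analysis is really an amortized/expected bound under the power-law model restricted to the dominant small-cluster phase. I would phrase the argument in expectation over the power-law distribution and treat the $d^{2\gamma}$ as the count of distinct degree-pair signatures, explicitly flagging that the bound captures the dominant cost on real-world graphs; the clean inequality $e_{i,j}\le\min(a_i,a_j)$ and the $O(1)$ per-computation cost from Lemma~\ref{lemma3} are the easy parts.
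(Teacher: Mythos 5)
Your proposal takes essentially the same route as the paper's own proof: the paper also collapses the tuple $s_{i,j}$ to the degree pair by assuming singleton clusters (so $e_i=e_j=0$, $e_{i,j}=1$), counts the distinct degree-pair signatures under the power law as $O(1/p(d)^2)=O(d^{2\gamma})$, and multiplies by the $O(1)$ per-computation cost of Lemma~\ref{lemma3}. The difficulty you flag about merged clusters with large $a_i$ is simply assumed away in the paper (``to simplify, we assume that each cluster is a singleton''), so your argument is correct at the same level of rigor as the original.
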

\begin{proof}
  To simplify, we assume that each cluster is a singleton, \textit{i.e.} $C_i = \{i\}$ for any $i \in V$.
  That is, $s_{i,j}$ $\equiv$ $s_{i^{\prime},j^{\prime}}$ iff $a_i=a_{i^{\prime}}$ and $a_{j}=a_{j^{\prime}}$ because we always have $e_i = e_j = 0$ and $e_{i,j} = 1$ for the singleton clusters.
  Since the frequency of nodes with $k$ neighbors is $p(k)\propto k^{-\gamma}$, the expected number of pairs whose structural properties are equivalent to $s_{i,j}$ is $2m \cdot p(a_{i})p(a_{j})$.
  Thus, to cover the whole of a graph, gScarf needs to compute LRM-gain scores at least $O(\frac{2m}{2m \cdot p(a_{i})p(a_{j})}) \approx O(\frac{1}{p(d)^2})$ times during the clustering.
  From Lemma~\ref{lemma3}, each LRM-gain computation needs $O(1)$ time, LRM-gain caching, therefore, requires $O(\frac{1}{p(d)^2}) = O(d^{2\gamma})$ times.
\end{proof}

\subsection{Incremental Subgraph Folding}
\label{sec:folding}
We introduce \textit{incremental subgraph folding} that removes unnecessary computations.
Real-world graphs generally have high clustering coefficient~\cite{WattsS1998}; they have a lot of triangles in a graph.
However, a high clustering coefficient entails unnecessary computations in CorMod since the triangles create numerous duplicate edges between clusters.
To avoid such computations, gScarf incrementally folds nodes placed in the same cluster into an equivalent node with weighted edges by extending theoretical aspects of the incremental aggregation presented in \cite{ShiokawaFO2013}.

First, we define the subgraph folding as follows:

\begin{definition}[Subgraph Folding]
  \label{def:folding}
  Given $G=(V,E, W)$, $i\in C_i$ and $j \in C_j$, where $C_i = (V_i, E_i)$ and $C_j = (V_j, E_j)$.
  Let $f$ be a function that maps every nodes in $V\backslash\{V_i \cup V_j\}$ to itself; otherwise, maps it to a new node $x$.
  The subgraph folding of nodes $i$ and $j$ results in a new graph $G^{\prime} = (V^{\prime}, E^{\prime}, W^{\prime})$, 
  where $V^{\prime}=V\backslash\{V_i \cup V_j\}\cup\{x\}$ and $E^{\prime} = \{ (f(u), f(v)) | (u,v)\in E \}$ with updated weight values $W^{\prime}_{f(u),f(v)}$ such that
  \begin{equation*}
    \textstyle
    W^{\prime}_{f(u),f(v)} \!\! = \!\! \begin{cases}
      2W_{u,v}+W_{u,u}+W_{v,v} & \!\!\!\! (f(u) =    x, f(v) =    x) \\
      W_{i,v} + W_{j,v}        & \!\!\!\! (f(u) =    x, f(v) \neq x) \\
      W_{u,i} + W_{u,j}        & \!\!\!\! (f(u) \neq x, f(v) =    x) \\
      W_{u,v}                  & \!\!\!\! (f(u) \neq x, f(v) \neq x) \\
    \end{cases}.
  \end{equation*}
\end{definition}
\noindent Definition~\ref{def:folding} transforms two nodes~$i$ and $j$ into an equivalent single node $x$ with weighted edges.
It replaces multiple edges between $i$ and $j$ into a self-loop edge $(x,x) \in E^{\prime}$ with a weight value $W_{x,x}$, which represents the number of edges between $i$ and $j$.
Similarly, let $k \in \Gamma(x)$, it merges edges $(i, k)$ and $(j,k)$ into weighted single edge $(x, k)$.
Thus, the subgraph folding reduces the number of nodes/edges from $G$.

\begin{lemma}
\label{lemma:equivalence}
Let nodes $i, j \in V$ be in the same cluster (\textit{i.e.} $C_i = C_j$).
If gScarf folds the nodes~$i$ and $j$ into a new node~$w$ by Definition~\ref{def:folding}, then LRM taken from node~$w$ is equivalent to that of the $C_{(i,j)}$ composed of nodes~$i$ and~$j$.
\end{lemma}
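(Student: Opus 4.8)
The plan is to show that the folded node $w$ carries exactly the structural quantities — the number of internal edges and the total degree — that determine $LRM(C_{(i,j)})$, so that the per-cluster LRM contribution is literally the same number. Recall from Equation~(\ref{eq:lrm}) that $LRM(C_i)$ depends on $C_i$ only through $tp(i)=e_i/2m$ and $ep(i)=(a_i/2m)^2$, i.e. only through the pair $(e_i, a_i)$ together with the global constant $m$. So it suffices to prove two bookkeeping identities: (i) $m$ is unchanged by the fold, and (ii) the internal-edge count and total degree attached to $w$ equal $e_{(i,j)}$ and $a_{(i,j)}$, the corresponding quantities of the merged cluster $C_{(i,j)}$ in the original graph.

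First I would fix the convention, implicit in Definition~\ref{def:folding}, that an edge's weight counts multiplicity, that a self-loop $(x,x)$ of weight $W_{x,x}$ contributes $2W_{x,x}$ to the degree of $x$ (each loop touches $x$ twice), and that $e_i$ for a cluster counts weighted internal edges while $a_i$ counts weighted degree. With this in hand, step (ii) splits into the three edge types in Definition~\ref{def:folding}: the self-loop case $f(u)=f(v)=x$ collects $2W_{i,j}+W_{i,i}+W_{j,j}$, which is exactly the total internal weight of $C_{(i,j)}$ — the edges between $i$ and $j$ counted from both sides plus whatever self-loops $i$ and $j$ already carried; the mixed cases $f(u)=x$, $f(v)\neq x$ collect $W_{i,v}+W_{j,v}$ for each external neighbor $v$, which is exactly the weight $C_{(i,j)}$ sends to $v$; and the remaining case leaves all other edges untouched. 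Summing degrees: $a_w = 2W_{x,x} + \sum_{v\neq x} W_{x,v} = (2W_{i,j}+W_{i,i}+W_{j,j}) + \sum_{v}(W_{i,v}+W_{j,v}) = a_i + a_j$, and similarly $e_w = e_i + e_j + W_{i,j}$ (weighted), matching $e_{(i,j)}$ and $a_{(i,j)}=a_i+a_j$. Step (i) is immediate: the fold only relabels and merges edge endpoints, never creating or destroying an edge's weight mass, so $\sum$ of all weights, hence $m$, is invariant — this is the same accounting as in the incremental aggregation of \cite{ShiokawaFO2013}, extended to the weighted self-loop representation.

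Given (i) and (ii), $tp(w) = e_w/2m = e_{(i,j)}/2m = tp((i,j))$ and $ep(w) = (a_w/2m)^2 = ep((i,j))$, so plugging into Equation~(\ref{eq:lrm}) yields $LRM(\{w\}) = LRM(C_{(i,j)})$, and I would close by noting that all other clusters' contributions are untouched since the fold modifies nothing outside $V_i\cup V_j$. The main obstacle, and the part deserving care, is step (ii) in the presence of pre-existing self-loops and multi-edges — i.e. when $i$ or $j$ is itself already a folded node — because then the factor-of-two conventions for loops have to be tracked consistently through every branch of the piecewise weight formula; a sloppy treatment there would double-count the $i$–$j$ edges or drop the inherited self-loops. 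Everything else is routine substitution into Equation~(\ref{eq:lrm}).
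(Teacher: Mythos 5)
Your argument is correct and rests on the same core fact as the paper's proof: by Definition~\ref{def:folding} the folded node $w$ inherits exactly the structural quantities of the merged cluster --- in the paper's notation $e_w=e_i+e_j+2e_{i,j}$ and $a_w=a_i+a_j$, with $m$ unchanged --- and the per-cluster LRM depends on the cluster only through these via $tp$ and $ep$. The finishing step differs slightly: you substitute directly into Equation~(\ref{eq:lrm}) and conclude $LRM(C_w)=LRM(C_{(i,j)})$ exactly, whereas the paper first passes to the Poisson-approximated log-form $L(C_i)=\frac{1}{2m}\ln LRM(C_i)$ of Equation~(\ref{eq:approx_lrm}), verifies $L(C_w)=L(C_{(i,j)})$ by the same substitution, and then appeals to Lemma~\ref{lemma1} to transfer the equality back to LRM. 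Your direct route is, if anything, cleaner: equality of $(e,a,m)$ already forces equality of $tp$ and $ep$, hence of the exact binomial ratio, so neither the Poisson limit nor Lemma~\ref{lemma1} is needed, and the conclusion is exact rather than up to the approximation. One bookkeeping slip to fix: the convention you state --- that a self-loop of weight $W_{x,x}$ contributes $2W_{x,x}$ to the degree of $x$ --- is inconsistent with Definition~\ref{def:folding}, whose assignment $W'_{x,x}=2W_{i,j}+W_{i,i}+W_{j,j}$ already carries the factor of two; under your stated convention the degree sum would come out as $a_w=a_i+a_j+2W_{i,j}$ rather than $a_i+a_j$. Your displayed algebra silently uses the correct count-once convention (self-loop weight added once to the degree), so the final identities and the lemma stand, but the convention sentence should be corrected to match --- this is precisely the factor-of-two pitfall you flag at the end.
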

\begin{proof}
Recall Equation~(\ref{eq:approx_lrm}), we have $L(C_w) = tp(w) \ln \frac{tp(w)}{ep(w)} - \{tp(w) - ep(w)\}$.
Since gScarf folds the nodes~$i$ and~$j$ into node~$w$, $e_{w} = e_i + e_j + 2e_{i,j}$ and $a_w = a_i + a_j$ by Definition~\ref{def:folding}.
Thus, we have 
\begin{eqnarray}
  \label{eq:equivalent}
  \lefteqn{\textstyle L(C_w) = tp(w)\ln\frac{tp(w)}{ep(w)} -\left\{tp(w) - ep(w)\right\} } \nonumber \\
  &=& \!\!\!\!\!\! \textstyle \frac{e_i+e_j+2e_{i,j}}{2m} \ln \frac{\frac{e_i+e_j+2e_{i,j}}{2m}}{\left(\frac{a_i+a_j}{2m}\right)^2} - \frac{e_i+e_j+2e_{i,j}}{2m} + \left(\frac{a_i+a_j}{2m}\right)^2 \nonumber \nonumber \\
  &=& \!\!\!\!\!\! \textstyle \frac{e_{(i,j)}}{2m} \ln \frac{\frac{e_{i,j}}{2m}}{\left(\frac{a_{(i,j)}}{2m}\right)^2} - \frac{e_{(i,j)}}{2m} + \left(\frac{a_{(i,j)}}{2m}\right)^2 = L(C_{(i,j)}).
\end{eqnarray}
By Lemma~\ref{lemma1}, if $L(C_w) = L(C_{(i,j)})$, then we have $LRM(C_w) = LRM(C_{(i,j)})$.
Therefore, from Equation~(\ref{eq:equivalent}), we hold Lemma~\ref{lemma:equivalence}.
\end{proof}
\noindent From Lemma~\ref{lemma:equivalence}, the subgraph folding does not fail to capture LRM-gain that shows a positive score.
That is, gScarf reduces the number of duplicated edges without sacrificing the clustering quality of CorMod.

Based on Definition~\ref{def:folding}, gScarf performs subgraph folding in an incremental manner during the clustering procedure.
gScarf greedily searches clusters using LRM-gain shown in Definition~\ref{def:lrmgain}.
Once cluster $C_i$ is chosen, gScarf computes LRM-gain $\triangle L_{i,j}$ for all neighbor clusters $C_j \in \Gamma(C_i)$, where $\Gamma(C_i)$ denotes a set of clusters neighboring $C_i$.
After that, gScarf folds a pair of clusters that yields the largest positive LRM-gain based on Definition~\ref{def:folding}.

Our incremental subgraph folding has the following theoretical property in terms of the computational costs.
\begin{lemma}
  \label{lemma:folding}
  A subgraph folding entails $O(d)$ times, where $d$ is the average degree.
\end{lemma}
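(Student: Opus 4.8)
The plan is to analyze the cost of a single invocation of the subgraph folding operation from Definition~\ref{def:folding}, counting the edge-update work it triggers, and to bound this by $O(d)$ using the average-degree assumption. First I would observe that when gScarf folds two clusters represented by nodes $i$ and $j$ into a new node $x$, the only edges whose endpoints or weights change are those incident to $i$ or to $j$: every other edge $(u,v)$ with $f(u)\neq x$ and $f(v)\neq x$ is left untouched by the last case of the weight-update rule. Hence the work is proportional to $|\Gamma(i)| + |\Gamma(j)|$, the combined number of neighbors of the two folded nodes.

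Next I would break the update into its constituent parts and verify each is cheap. The self-loop weight $W_{x,x}$ is computed in $O(1)$ from $W_{i,j}$, $W_{i,i}$, $W_{j,j}$. For each neighbor $k$ of $x$ we must form $W_{x,k} = W_{i,k} + W_{j,k}$, which is $O(1)$ per neighbor; merging the adjacency lists of $i$ and $j$ (detecting common neighbors so that parallel edges collapse) costs $O(|\Gamma(i)| + |\Gamma(j)|)$ with an appropriate representation such as a hash map on neighbor ids. Removing $i$ and $j$ and inserting $x$ into the graph structure is again proportional to their degrees. Summing, one folding costs $O(|\Gamma(i)| + |\Gamma(j)|)$.

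Finally I would convert this degree-sum bound into the claimed $O(d)$ by invoking the average-degree assumption: since $d$ denotes the average degree of the graph, the degree of a node (or of a folded super-node, whose degree is at most the sum of the degrees of its constituents minus the edges internalized by folding) is $O(d)$ in the amortized/expected sense that the paper uses elsewhere — consistent with the power-law reasoning behind Lemma~\ref{lemma:caching}. Therefore $|\Gamma(i)| + |\Gamma(j)| = O(d)$, and a single subgraph folding runs in $O(d)$ time.

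The main obstacle I anticipate is justifying the step from ``cost is $O(\deg(i)+\deg(j))$'' to ``cost is $O(d)$'': in the worst case a single folded node could accumulate degree far above the average, so the $O(d)$ claim must be read as an average- (or amortized-) case statement tied to the degree distribution rather than a strict worst-case bound, and I would state explicitly that we rely on the same average-degree convention used in Lemma~\ref{lemma:caching} and Lemma~\ref{lemma:folding}'s surrounding discussion. A secondary, more routine, point is confirming that the adjacency-list merge truly runs in linear time in the degrees — this needs the (standard) assumption of $O(1)$-expected-time hashing for neighbor lookups, which the paper already implicitly adopts for LRM-gain caching.
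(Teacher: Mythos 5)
Your proof is essentially sound, but it takes a slightly different accounting route than the paper. You bound one folding by the \emph{sum} $O(|\Gamma(C_i)|+|\Gamma(C_j)|)$, updating every edge incident to either folded cluster, and then appeal to the average-degree convention to get $O(d)$. The paper instead uses the smaller-into-larger merging idea: only the neighbor list of the cluster with fewer neighboring clusters is traversed and its weights are merged into the other's adjacency structure, giving the tighter per-operation bound $O(\min\{|\Gamma(C_i)|,|\Gamma(C_j)|\})$, which it then equates with $O(d)$. What the paper's version buys is a smaller quantity before the final (admittedly heuristic) identification with $d$; what your version buys is a more explicit account of all the bookkeeping (self-loop weight, collapsing parallel edges, hash-based neighbor lookups), which the paper leaves implicit. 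Your closing caveat --- that a folded super-node may have degree far above the average, so $O(d)$ is an average-case statement rather than a worst-case one --- is well taken, and note that it applies equally to the paper's own proof, since $\min\{|\Gamma(C_i)|,|\Gamma(C_j)|\}$ is also not bounded by $d$ in the worst case; both arguments rest on the same loose convention used around Lemma~\ref{lemma:caching} and Theorem~\ref{th:efficiency}.
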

\begin{proof}
By Definition~\ref{def:folding}, the subgraph folding requires to update the weights of each edge by traversing all neighboring clusters.
Thus, a subgraph folding for $C_i$ and $C_j$ entails time-complexity $O(\min\{|\Gamma(C_i)|,|\Gamma(C_j)|\}) = O(d)$.
\end{proof}
\begin{algorithm}[t]
  \caption{Proposed method: gScarf}
  \label{alg:gscarf}
  \footnotesize
  \begin{algorithmic}[1]
    \Require A graph $G=(V, E, W)$;
    \Ensure A set of clusters $\mathbb{C}$;
    %
    \State $\mathbb{T} = \emptyset$;
    \For{\textbf{each} $i \in V$}
      \State $C_i = \{i\}$, and $\mathbb{T} = \mathbb{T}\cup\{C_i\}$;
    \EndFor
    %
    \While{$\mathbb{T}\neq\emptyset$}
      \State Get $C_i$ from $\mathbb{T}$;
      \State $C_{best} = C_i$;
      \For{$C_j \in \Gamma(C_i)$}
        \State $s_{i,j} = \left < e_i, a_i, e_j, a_j, e_{i,j} \right >$;
        \State \textbf{if} $h(s_{i,j}) = null$ \textbf{then} $h(s_{i,j}) \leftarrow \triangle L_{i,j}$;
        \State \textbf{if} $h(s_{i,j})>h(s_{i,best})$ \textbf{then} $C_{best} = C_{j}$;
      \EndFor
      \State $\!\!$ \textbf{if} $h(s_{i,best})>0$ \textbf{then}
      \State \hspace{2mm} $C^{\prime} \leftarrow fold(C_i, C_{best})$ by Definition~\ref{def:folding};
      \State \hspace{2mm} $\mathbb{T} = \mathbb{T}\backslash\{C_{i}, C_{best}\}\cup\{C^{\prime}\}$;
      \State $\!\!$ \textbf{else} $\mathbb{T} = \mathbb{T}\backslash\{C_{i}\}$;
    \EndWhile
    \State $\!\!$ \Return $\mathbb{C}$; 
  \end{algorithmic}
\end{algorithm}

\subsection{Algorithm}
\label{sec:algorithm}
Algorithm~\ref{alg:gscarf} gives a full description of gScarf.
First, gScarf initializes each node as a singleton cluster, \textit{i.e.} $C_{i} = \{i\}$, and it stores all clusters into a target node set $\mathbb{T}$ (lines~1-3).
Then, gScarf starts the clustering phase to find a set of clusters $\mathbb{C}$ that maximizes LRM in a local maximization manner (lines~4-15).
Once gScarf selects a cluster $C_i$ from $\mathbb{T}$ (line~5), it explores neighboring cluster $C_{best}$ that yields the largest positive score of LRM-gain (lines~6-10).
To improve the clustering efficiency, gScarf uses the LRM-gain caching $h(s_{i,j})$ in Definition~\ref{def:caching};
gScarf skips unnecessary computations based on the deterministic property shown in Lemma~\ref{lemma2} (lines~8-10).
If gScarf finds the structural property $s_{i^{\prime},j^{\prime}}$ $\equiv$ $s_{i,j}$ in $h$, then it reuses $h(s_{i^{\prime},j^{\prime}})=\triangle L_{i^{\prime},j^{\prime}}$ 
instead of $\triangle L_{i,j}$; 
otherwise, it computes $\triangle L_{i,j}$ by Definition~\ref{def:lrmgain} (line~9).
After finding $C_{best}$, gScarf performs the incremental subgraph folding for $C_i$ and $C_{best}$ (lines~11-14).
If $h(s_{i,best})=\triangle L_{i,best} > 0$, gScarf contracts $C_i$ and $C_{best}$ into a single cluster $C^{\prime}$ by Definition~\ref{def:folding} (line~19).
gScarf terminates when $\mathbb{T}=\emptyset$ (line~4).
Finally, it outputs a set of clusters $\mathbb{C}$ in (line~15).

The computational cost of gScarf is analyzed as follows:
\begin{theorem}
\label{th:efficiency}
gScarf incurs $O(m + d^{2\gamma})$ time to obtain a clustering result from a graph $G$,
where $m$ is the number of edges in $G$, d is the average degree, and $\gamma$ is a positive constant that controls the skewness of the degree distribution.
\end{theorem}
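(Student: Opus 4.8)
The plan is to bound the total running time by accounting separately for the two phases of Algorithm~\ref{alg:gscarf}: the initialization (lines~1--3) and the main clustering loop (lines~4--15). The initialization trivially costs $O(n)$ time to create the singleton clusters and populate $\mathbb{T}$. For the clustering loop, I would argue that the work done is the sum of three contributions: (i) iterating over neighbor clusters $\Gamma(C_i)$ to locate $C_{best}$, (ii) the actual LRM-gain computations that are \emph{not} served from the cache, and (iii) the subgraph folding operations. The key point is to assign each unit of work either to an edge of (the evolving) graph or to a cache miss, and then sum.

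For contribution (i), each time a cluster $C_i$ is pulled from $\mathbb{T}$, gScarf scans its neighbor list once; since folding only merges clusters and never increases the total number of edges (by Definition~\ref{def:folding}, multi-edges collapse into weighted edges and self-loops), the sum of $|\Gamma(C_i)|$ over all iterations is $O(m)$. For contribution (iii), by Lemma~\ref{lemma:folding} each fold costs $O(d)$, and there are at most $n-1$ folds (each reduces the cluster count by one), but more tightly each fold's cost is again charged against the edges it rewrites, so the total folding cost is also absorbed into $O(m)$. For contribution (ii), by Lemma~\ref{lemma3} each genuine (non-cached) LRM-gain computation is $O(1)$, and by Lemma~\ref{lemma:caching} the number of such computations over the whole run is $O(d^{2\gamma})$ under the power-law assumption. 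Adding the three pieces gives $O(n + m + d^{2\gamma}) = O(m + d^{2\gamma})$ since $n \le m+1$ for connected graphs (or one absorbs $n$ into $m$ under the standard assumption), which is the claimed bound.

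The main obstacle I anticipate is making the charging argument for contribution (i) genuinely rigorous in the presence of folding: when clusters merge, the neighbor lists change, and one must be careful that re-scanning a merged cluster's neighbors does not lead to a superlinear blow-up. The cleanest way around this is to observe that Lemma~\ref{lemma:folding}'s $O(d)$ bound already covers the cost of maintaining neighbor lists after a merge, and that the total number of (cluster, neighbor) incidences examined across the entire execution never exceeds the original $O(m)$ because each merge strictly decreases this quantity (parallel edges between a merged pair and a common neighbor become one weighted edge). A secondary subtlety is that Lemma~\ref{lemma:caching} was stated under the simplifying assumption that clusters are singletons; to use it here I would remark (as the paper implicitly does) that the number of \emph{distinct} structural tuples $s_{i,j}$ encountered throughout clustering is still governed by the degree distribution, so the $O(d^{2\gamma})$ cache-miss bound carries over to the general case. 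Once these two points are granted, the theorem follows by simple addition of the per-phase costs.
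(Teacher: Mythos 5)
Your proposal matches the paper's own argument: the paper likewise bounds the clustering loop by $O(n)$ iterations at $O(d)$ each (neighbor scan plus at most one fold, via Lemma~\ref{lemma:folding}), giving $O(nd)=O(m)$, and adds the $O(d^{2\gamma})$ term for the non-cached LRM-gain computations via Lemma~\ref{lemma:caching}. Your version is essentially the same decomposition, only spelled out more carefully (the edge-charging and the remark that Lemma~\ref{lemma:caching} is stated for singletons are refinements the paper leaves implicit).
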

\begin{proof}
  Algorithm~\ref{alg:gscarf} requires $O(|\mathbb{T}|) \approx O(n)$ iterations for the clustering phase.
  In each iteration, gScarf invokes at most one subgraph folding (lines~18-21) that entails $O(d)$ costs from Lemma~\ref{lemma:folding}.
  Thus, it has $O(nd)=O(m)$ times to perform the clustering phase.
  Additionally, gScarf employs the LRM-gain caching during the iterations.
  As we proved in Lemma~\ref{lemma:caching}, it incurs $O(d^{2\gamma})$ times.
  Therefore, gScarf requires $O(m + d^{2\gamma})$ times to obtain a clustering result.
\end{proof}

\noindent 
The computational costs of gScarf are dramatically smaller than those of CorMod, which requires $O(nm \log n)$ time.
In practice, real-world graphs show $d^{2\gamma}\ll m$ since they generally have small values of $d$ and $\gamma$ due to the power-law degree distribution~\cite{Faloutsos1999}.
Specifically, as shown in Table~\ref{tab:datasets}, the real-world datasets examined in the next section have $d < 39$ and $\gamma<2.3$.
Thus, gScarf has a nearly linear time complexity against the graph size (\textit{i.e.,} $O(m + d^{2\gamma}) \approx O(m)$) on the real-world graphs with the power-law degree distribution.
Furthermore, our incremental subgraph folding effectively handles the clustering coefficient~\cite{WattsS1998,ShiokawaFO2015} to further reduce the computational costs.
Thus, gScarf has an even smaller cost than the one shown in Theorem~\ref{th:efficiency}.

%
\begin{table}[t]
  \begin{center}
    \vspace{-2mm}
    \caption{Statistics of real-world datasets.}
    \vspace{-4mm}
    \label{tab:datasets}
          {\tabcolsep = 1.4mm
            \scriptsize
            \begin{tabular}{|crrrrcl|} \hline 
              \multicolumn{1}{|c}{Name} & \multicolumn{1}{c}{$n$} & \multicolumn{1}{c}{$m$} & \multicolumn{1}{c}{$d$} & \multicolumn{1}{c}{$\gamma$} & \multicolumn{1}{c}{Ground-truth} & \multicolumn{1}{c|}{Source} \\\hline
              YT &                  1.13 M &                  2.98 M &                    2.63 &                         1.93 & \checkmark                       & {\scriptsize com-Youtube (SNAP)} \\ 
              WK &                  1.01 M &                  25.6 M &                    25.1 &                         2.02 & N/A                              & {\scriptsize itwiki-2013 (LAW)} \\ 
              LJ &                  3.99 M &                  34.6 M &                    8.67 &                         2.29 & \checkmark                       & {\scriptsize com-LiveJournal (SNAP)} \\ 
              OK &                  3.07 M &                   117 M &                    38.1 &                         1.89 & \checkmark                       & {\scriptsize com-Orkut (SNAP)} \\ 
              WB &                   118 M &                  1.01 B &                    8.63 &                         2.14 & N/A                              & {\scriptsize webbase-2001 (LAW)} \\ 
              TW &                  41.6 M &                  1.46 B &                    35.2 &                         2.27 & N/A                              & {\scriptsize twitter-2010 (LAW)} \\ \hline
            \end{tabular}
          }
  \end{center}
  \vspace{-4mm}
\end{table}

\section{Experimental Evaluation}
\label{sec:evaluation}
In this section, we experimentally evaluate the efficiency and the clustering accuracy of gScarf.

\begin{figure*}[t]
  \begin{center}
    \includegraphics[height=38mm]{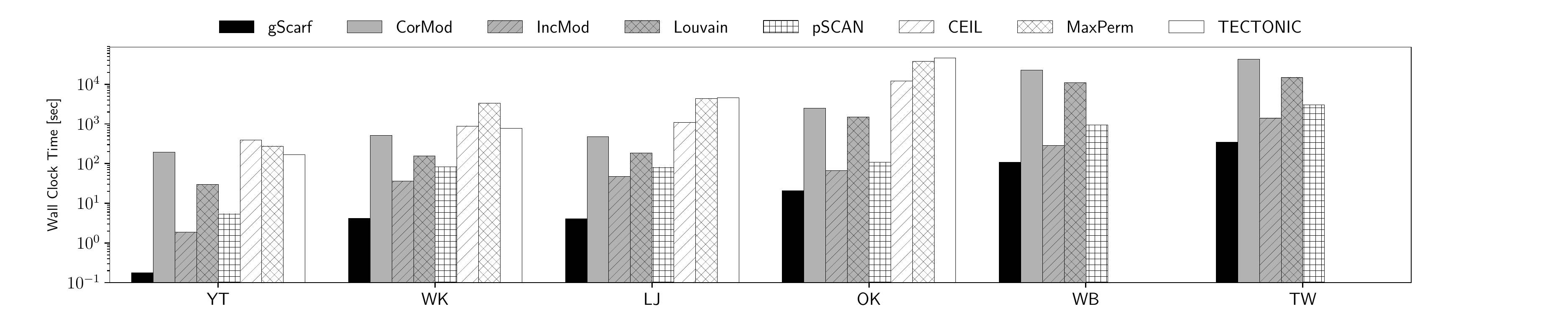}
    \vspace{-8mm}
    \caption{Runtimes on real-world graphs.}
    \label{fig:runtime_real}
  \end{center}
\end{figure*}

\begin{figure*}[t]
  \vspace{-6mm}
  \begin{minipage}{0.49\hsize}
  \begin{center}
    \hspace{2mm}\includegraphics[width=80mm]{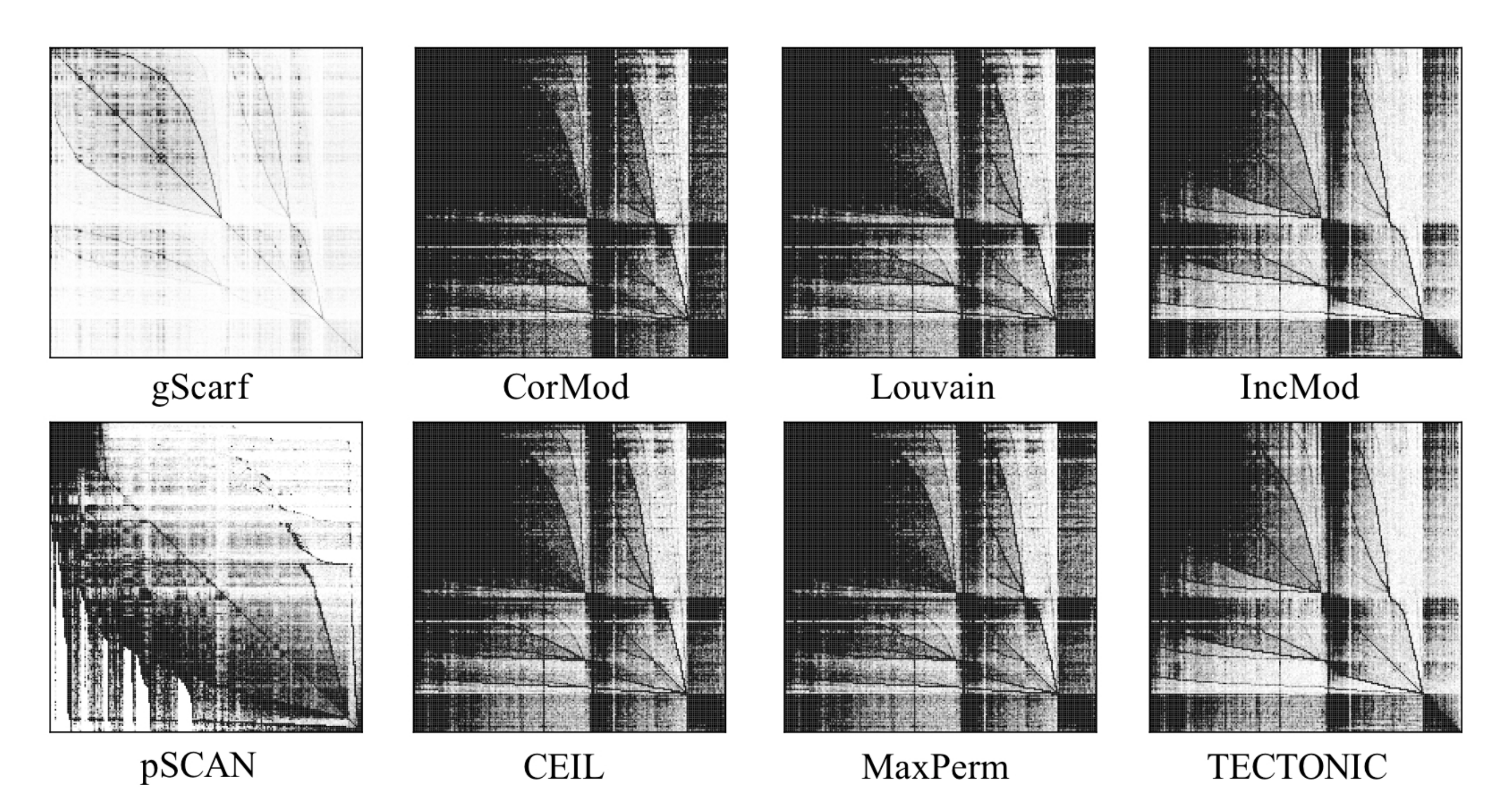}
    \vspace{-4mm}
    \caption{Computed edges on YT in each approach.}
    \label{fig:hist2d}
  \end{center}
  \end{minipage}
  \begin{minipage}{0.49\hsize}
    \begin{center}
      \includegraphics[width=84mm]{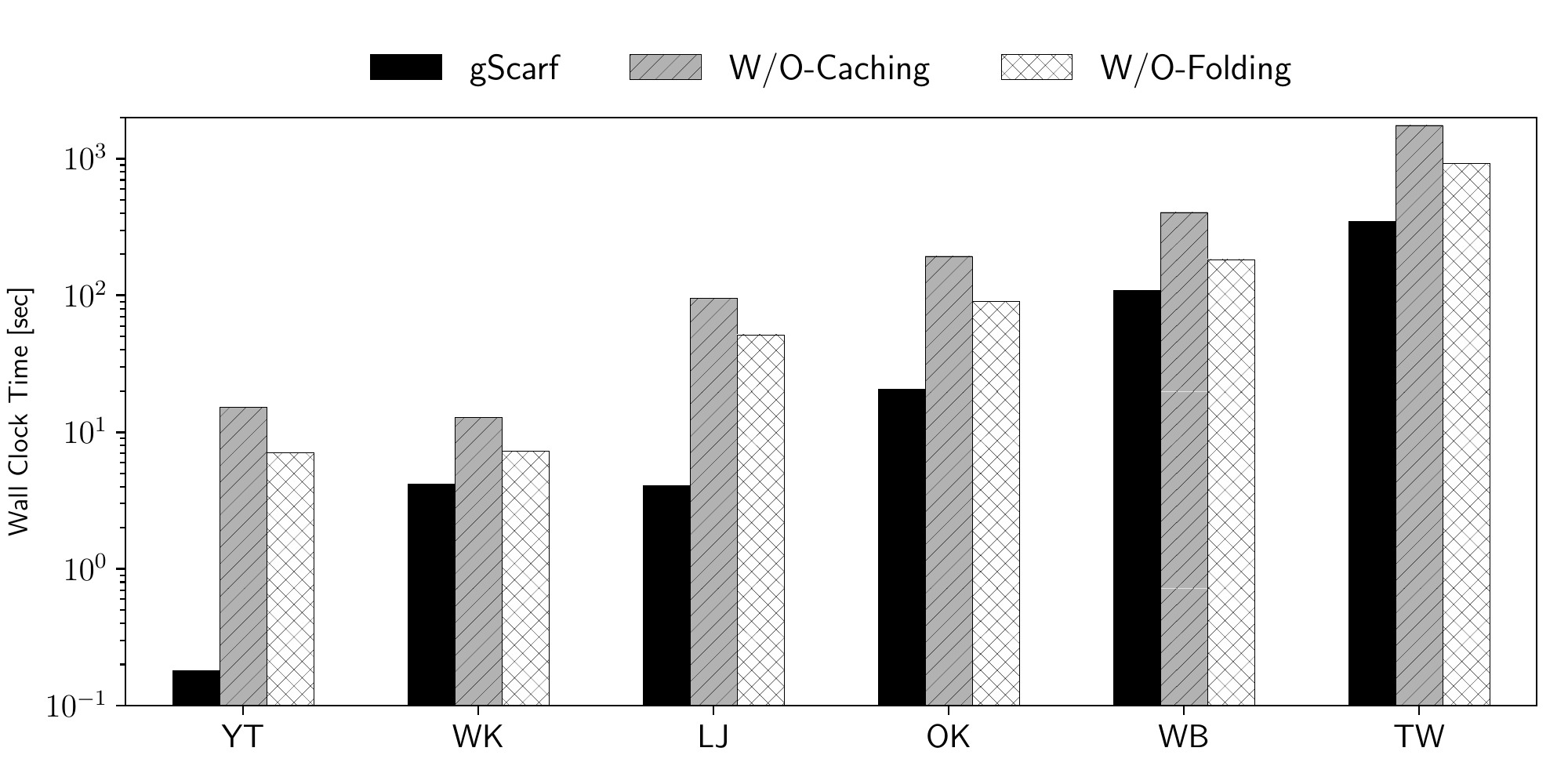}
      \vspace{-4mm}
      \caption{Effectiveness of key approaches.}
      \label{fig:effectiveness}
    \end{center}
  \end{minipage}
  \vspace{-4mm}
\end{figure*}

\noindent \textbf{Experimental Setup:}
We compared gScarf with the following state-of-the-art graph clustering baseline algorithms:
\begin{itemize}
\item \textbf{CorMod}: 
  The original correlation-aware modularity clustering \cite{DuanCDG2014}.
  CorMod greedily maximizes the same criteria (LRM) as gScarf.
\item \textbf{Louvain}: 
  The most popular modularity clustering for large graphs \cite{Blondel2008}.
  This method greedily maximizes the modularity in Equation~(\ref{eq:modularity}).
\item \textbf{IncMod}: 
  The fastest modularity maximization method proposed by \cite{ShiokawaFO2013}.
  It improves the efficiency of Louvain via incremental aggregation and pruning techniques.
\item \textbf{pSCAN}: 
  The density-based graph clustering recently proposed by \cite{ChangLQZY17}.
  It extracts clusters by measuring a density of node connections based on thresholds $\epsilon$ and $\mu$.
  We set $\epsilon = 0.6$ and $\mu = 5$ as the same settings as used in \cite{ChangLQZY17}.
\item \textbf{CEIL}: 
  A scalable resolution limit-free method based on a new clustering metric that quantifies internal and external densities of the cluster \cite{SankarRS2015}.
  It detects clusters by greedily maximizing the metric.
\item \textbf{MaxPerm}: 
  The resolution limit free method based on the localization approach \cite{Tanmoy14}.
  This method quantifies permanence of a node within a cluster, and it greedily explores clusters so that the permanence increases.
\item \textbf{TECTONIC}: 
  The motif-aware clustering method proposed by \cite{Tsourakakis2017}.
  This method extracts connected components as clusters by removing edges with a smaller number of triangles than threshold $\theta$.
  We used the recommended value $\theta = 0.06$.
\end{itemize}
All experiments were conducted on a Linux server with a CPU (Intel(R) Xeon(R) E5-2690 2.60 GHz) and 128 GB RAM.
All algorithms were implemented in C/C++ as a single-threaded program, which uses a single CPU core with the entire graph held in the main memory.

\noindent \textbf{Datasets:} 
We used six real-world graphs published by SNAP \cite{SNAP} and LAW \cite{LAW}.
Table~\ref{tab:datasets} shows their statistics.
The symbols, $d$ and $\gamma$ denote the average degree and the skewness of the power-law degree distribution, respectively.
As shown in Table~\ref{tab:datasets}, only YT, LJ, and OK have their ground-truth clusters.
In our experiments, we also used synthetic graphs with their ground-truth clusters generated by LFR-benchmark \cite{LancichinettiFK2009}, which is the \textit{de facto standard} model for generating graphs.
The setting are described in detail in Section~\ref{sec:accuracy}.
\begin{figure*}[t]
   \begin{minipage}{0.32\hsize}
      \begin{center}
        \includegraphics[width=62mm]{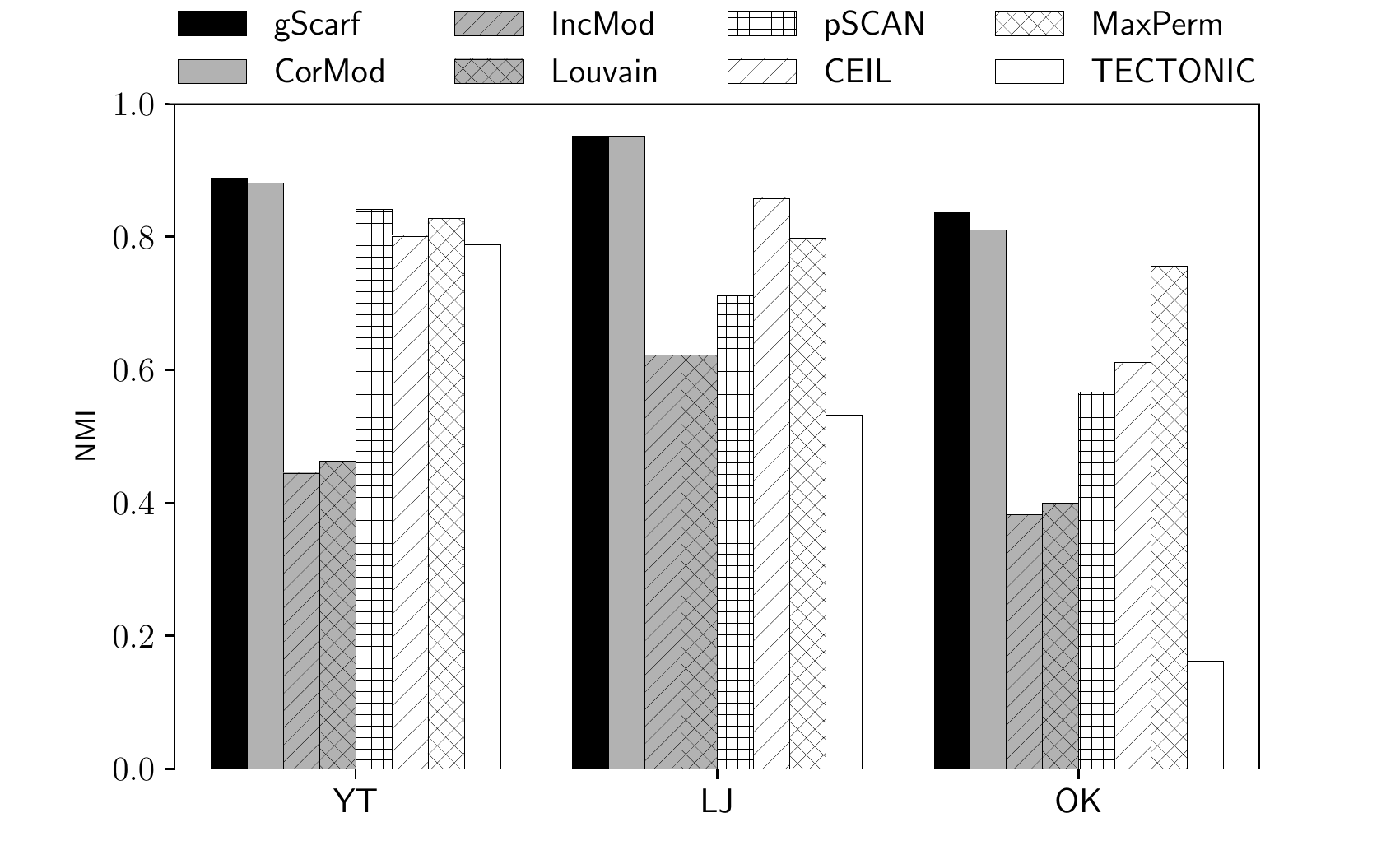}
        \vspace{-1mm}
        \text{\footnotesize (a) Real-world graphs}
      \end{center}
   \end{minipage}
   \begin{minipage}{0.32\hsize}
      \begin{center}
        \includegraphics[width=62mm]{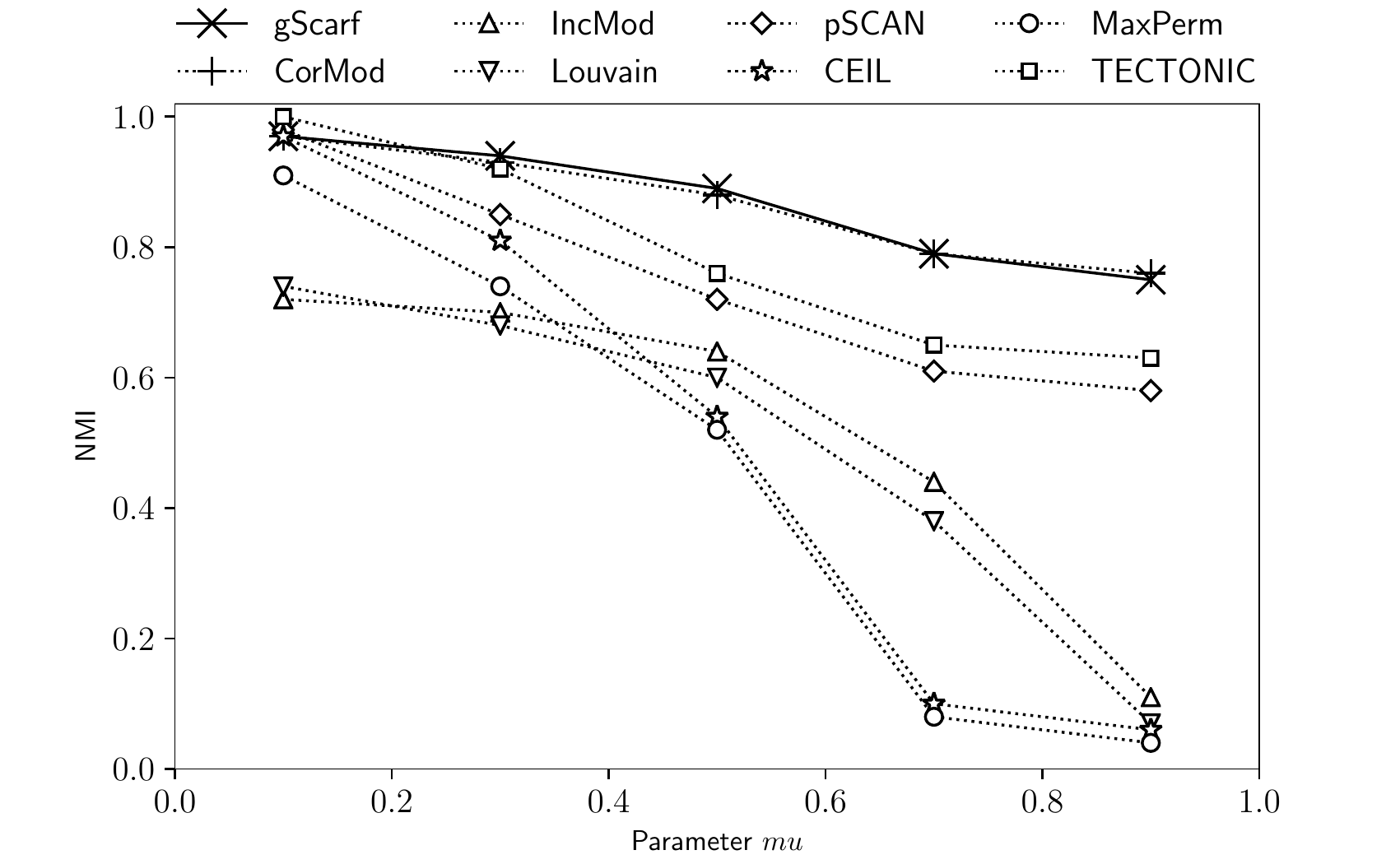}
        \vspace{-1mm}
        \text{\footnotesize (b) LFR benchmark graphs varying $\textit{mu}$.}
      \end{center}
    \end{minipage}
    \begin{minipage}{0.32\hsize}
      \begin{center}
        \includegraphics[width=62mm]{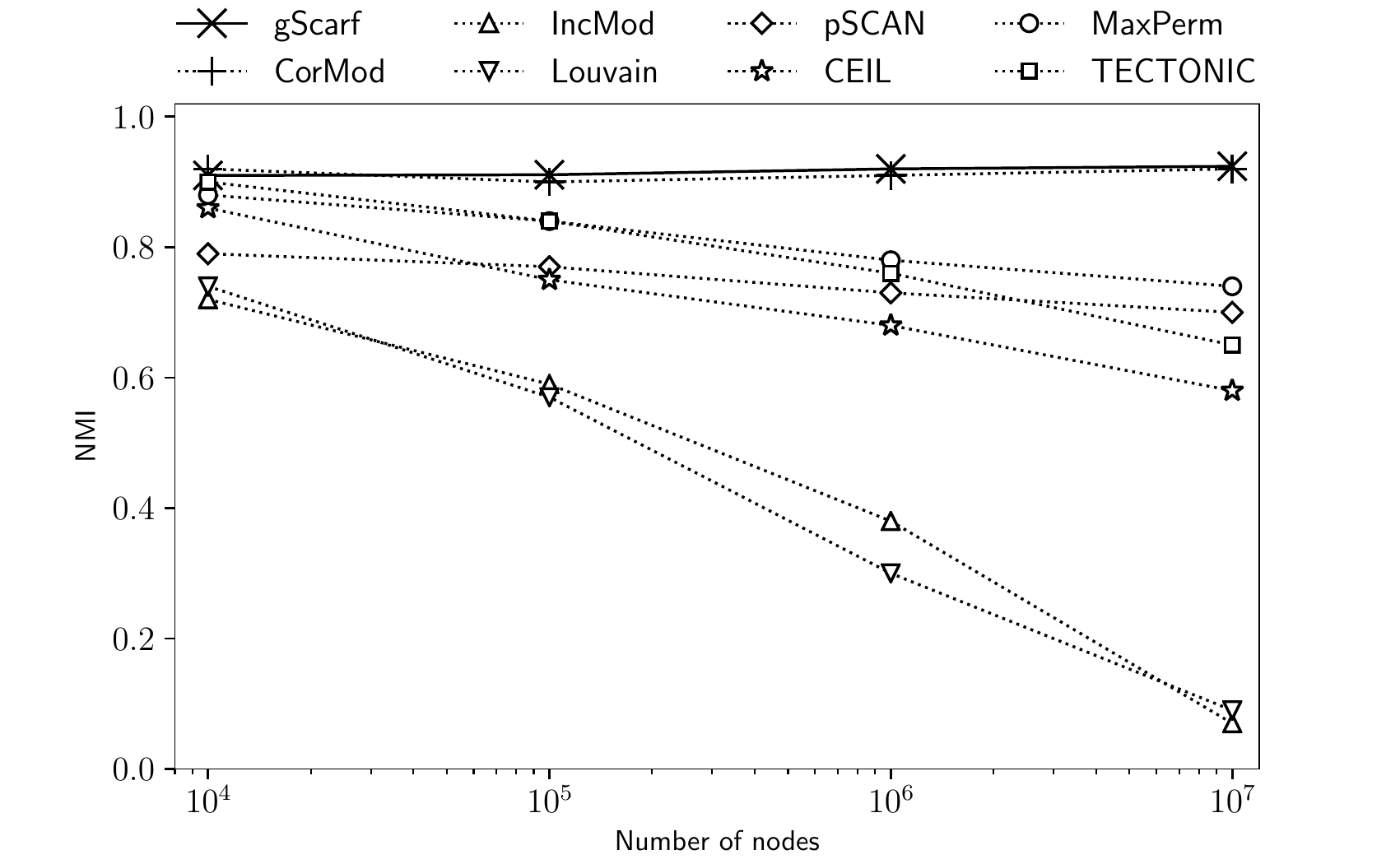}
        \vspace{-1mm}
        \text{\footnotesize (c) LFR benchmark graphs varying $n$.}
      \end{center}
    \end{minipage}
    \vspace{-1mm}
    \caption{NMI scores on real-world and synthetic graphs.}
    \label{fig:accuracy}
    \vspace{-2mm}
\end{figure*}

%
%
\begin{table*}[t]
  \begin{center}
    \caption{Average cluster sizes.}
    \label{tab:size}
    \vspace{-3mm}
           {
             \scriptsize
             \begin{tabular}{|c|c|cccccccc|} \hline 
                  & Ground-truth & gScarf & CorMod & Louvain & IncMod  & pSCAN & CEIL & MaxPerm & TECTONIC \\ \hline
               YT & 13.5         &  13.3   & 13.3   & 66.1    & 50.4    & 24.3  & 5.6   & 11.4   & 8.2      \\
               LJ & 40.6         &  44.2   & 45.1   & 111.4   & 104.5   & 81.9  & 11.3  & 33.3   & 10.7     \\
               OK & 215.7        &  194.9  & 194.1  & 16551.9 & 15676.7 & 403.7 & 35.7  & 121.9  & 9.61     \\ \hline
             \end{tabular}
           }
  \end{center}
  \vspace{-5mm}
\end{table*}

\subsection{Efficiency}
\label{sec:efficiency}
We evaluated the clustering times of each algorithm in the real-world graphs (Figure~\ref{fig:runtime_real}).
Note that we omitted the results of CEIL, MaxPerm, and TECTONIC for large graphs since they did not finish within 24 hours.
Overall, gScarf outperforms the other algorithms in terms of running time.
On average, gScarf is 273.7 times faster than the other methods.
Specifically, gScarf is up to 1,100 times faster than CorMod, although they optimize the same clustering criterion.
This is because gScarf exploits LRM-gain caching and does not compute all nodes and edges.
In addition, it further reduces redundant computations incurred by triangles included in a graph by incrementally folding subgraphs.
As a result, as we proved in Theorem~\ref{th:efficiency}, gScarf shows a nearly linear running time against the graph size.
That is why our approach is superior to the other methods in terms of running time.

To verify how gScarf effectively reduces the number of computations, Figure~\ref{fig:hist2d} plots the computed edges in each algorithm for YT where a black (white) dot indicates that the (i,j)-th element is (is not) computed.
gScarf decreases the computational costs by 83.9–98.3\% compared to the other methods.
The results indicate that the LRM-gain caching and the incremental subgraph folding successfully reduce unnecessary computations.
This is due to the power-law degree distribution, which is a natural property of real-world graphs including YT.
If a graph has a strong skewness in its degree distribution, there are numerous equivalent structural properties since the graph should have similar degrees.
Hence, gScarf can remove a large amount of computations.

We then evaluated the effectiveness of our key approaches.
We compared the runtimes of gScarf with its variants that exclude either LRM-gain caching or incremental subgraph folding.
Figure~\ref{fig:effectiveness} shows the runtimes of each algorithm in real-world graphs, where \textit{W/O-Caching} and \textit{W/O-Folding} indicate gScarf without LRM-gain caching and incremental subgraph folding, respectively.
gScarf is 21.6 and 10.4 times faster than W/O-Caching and W/O-Folding on average, respectively.
These results indicate that LRM-gain caching contributes more to the efficiency improvements even though it requires $O(d^{2\gamma})$ times.
As discussed in Section~\ref{sec:algorithm}, real-world graphs generally have small values of $d$ and $\gamma$ due to the power-law degree distribution.
For instance, as shown in Table~\ref{tab:datasets}, the datasets that we examined have small $d$ and $\gamma$, \textit{i.e.,} $d \le 38.1$ and $\gamma <2.3$ at most.
Consequently, the time complexity $O(d^{2\gamma})$ is much smaller than the cost for incremental subgraph folding in real-world graphs.
Hence, gScarf can efficiently reduce its running time.

\subsection{Accuracy of Clustering Results}
\label{sec:accuracy}
One major advantage of gScarf is that it does not sacrifice clustering quality, although it dynamically skips unnecessary computations.
To verify the accuracy of gScarf, we evaluated the clustering quality against ground-truth clusters on both real-world and synthetic graphs.
We used \textit{normalized mutual information (NMI)}~\cite{Cilibrasi2005} to measure the accuracy compared with the ground-truth clusters.

\paragraph{Real-world graphs}
Figure~\ref{fig:accuracy}~(a) shows NMI scores of each algorithm in YT, LJ, and OK.
We used the top-5000-community datasets published by \cite{SNAP} as the ground-truth clusters for all graphs.
In the top-5000-community datasets, several nodes belong to multiple ground-truth clusters. 
We thus assigned such nodes into the cluster where most of its neighbor nodes are located.

As shown in Figure~\ref{fig:accuracy}, gScarf shows higher NMI scores than the other methods except for CorMod.
In addition, gScarf has almost the same or slightly higher NMI score than the original method CorMod.
This is because both employ LRM maximization to avoid the resolution limit problem. 
Furthermore, gScarf is theoretically designed not to miss any chance to increase LRM by Lemma~\ref{lemma1} and Lemma~\ref{lemma:equivalence}.
That is, the greedy LRM-gain maximization in gScarf is equivalent to that in CorMod.
Hence, gScarf achieves better NMI scores even though it dynamically prunes unnecessary nodes/edges by LRM-gain caching and incremental subgraph folding.

To analyze the cluster resolution, we compared the average sizes of clusters extracted by each algorithm.
As shown in Table~\ref{tab:size}, gScarf and CorMod effectively reproduce the cluster sizes of the ground-truth in large real-world graphs.
This is because LRM maximization is designed to avoid the resolution limit problem \cite{DuanCDG2014}, and LRM effectively extract cluster sizes that approximate the ground-truth clusters.
On the other hand, other algorithms extract coarse-grained or significantly smaller clusters.
Especially in the modularity maximization methods (\textit{i.e.} Louvain and IncMod), they output larger clusters than the ground-truth since they grow clusters until their sizes reach $\sqrt{m}$.

\paragraph{Synthetic graphs}
To verify the effects of graph structures and graph sizes, we evaluated the NMI scores on LFR-benchmark graphs.
In Figure~\ref{fig:accuracy}~(b), we generated graphs composed of $10^{6}$ nodes with average degree 20 by varying the mixing parameter $\textit{mu}$ from 0.1 to 0.9.
$\textit{mu}$ controls the fraction of neighbor nodes included in other clusters;
as $\textit{mu}$ increases, it becomes more difficult to reveal the intrinsic clusters.
In contrast, in Figure~\ref{fig:accuracy}~(c), we used graphs generated by varying the number of nodes from $10^{4}$ to $10^{7}$, where the average degree and $\textit{mu}$ are set to 20 and 0.5, respectively.

Figure~\ref{fig:accuracy}~(b) shows that gScarf and CorMod achieve high accuracy even if the parameter $\textit{mu}$ increases, whereas most of the other methods degrade NMI scores for large $\textit{mu}$.
In addition, as shown in Figure~\ref{fig:accuracy}~(c), gScarf also shows higher NMI scores than the others for large graphs since LRM effectively avoids to output super-clusters regardless of inputted graph sizes.
Hence, gScarf efficiently detects clusters that effectively approximate the ground-truth clusters in massive graphs.
From these results, we confirmed that gScarf does not sacrifice the clustering quality of CorMod.

\section{Conclusion}
\label{sec:conclusion}
We proposed gScarf, an efficient algorithm that produces fine-grained modularity-based clusters from massive graphs.
gScarf avoids unnecessary computations by LRM-gain caching and incremental subgraph folding.
Experiments showed that gScarf offers an improved efficiency for massive graphs without sacrificing the clustering quality compared to existing approaches.
By providing our efficient approaches that suit for massive graphs, gScarf will enhance the effectiveness of future applications based on the graph analysis.

\section*{Acknowledgements}
This work was partially supported by JST ACT-I.
We thank to Prof. Ken-ichi Kawarabayashi (National Institute of Informatics, Japan) for his helps and useful discussions.

\section*{A How to Handle Directed Graphs}
We here detail gScarf modifications for directed graphs.
As discussed in \cite{Nicosia2009}, we can handle the directed graphs in the modularity maximization 
by replacing $a_{i}$ to $a_{i}^{in} a_{i}^{out}$ in $ep(i)$, where $a_{i}^{in}$ and $a_{i}^{out}$ are in-degree and out-degree of node $i$, respectively.
Thus, our LRM-gain caching can handle the directed graphs by replacing $a_{i}$ and $a_{j}$ of $s_{i,j}$ in Definition~\ref{def:caching} to $a_{i}^{in} a_{i}^{out}$ and $a_{j}^{in} a_{j}^{out}$, respectively.
Similarly, we need to modify Definition~\ref{def:folding} so that the folding technique takes account of directions of edges.
That is, our incremental subgraph folding can handle the directed graphs by aggregating in-degree edges and out-degree separately.

\bibliographystyle{named}
\bibliography{ijcai19}

\end{document}